\theoremstyle{theorem}
\newtheorem{proposition}{Proposition}
\newcommand{\vett}[1]{\mathbf{#1}}
\providecommand{\bgreek}[1]{\mbox{\boldmath$#1$}}
\begin{document}
\title{Single particle entanglement in the mid-and ultra-relativistic regime}
\author{Matthias Ondra}
%\email{mail@example.com}
\author{Beatrix C. Hiesmayr}
\affiliation{University of Vienna, Faculty of Physics, Boltzmanngasse 5, 1090 Vienna, Austria}

\begin{abstract}
In this work we analyze the amount of entanglement associated with the spin and momentum degrees of freedom of a single massive spin-$\frac{1}{2}$ particle from a relativistic perspective. The effect of a Lorentz boost introduces a Wigner rotation that correlates the spin and momentum degrees of freedom. We show that the  natural basis to discuss the geometrical effects of the boost are the helicity eigenstates in the rest frame. In the mid-relativistic regime (where the Wigner rotation angle is limited by $\delta < \frac{\pi}{2}$) we prove for states with equal helicity that the entanglement with respect to the Wigner rotation angle is monotonically decreasing, however, in the  ultra-relativistic regime ($\delta > \frac{\pi}{2}$) the entanglement is increasing. If the states are prepared as a superposition of unequal helicity eigenstates, the monotonic behaviour is inverted.
This implies that in the ultra-relativistic regime a geometrical setup can be found such that the amount of entanglement exhibits local maxima or minima. This shows a counter-intuitive behaviour of the relative amount of entanglement, an effect due to the internal and external geometrical configuration space, and points towards the difficulties in achieving a Lorentz invariant formulation of entanglement in general.
\end{abstract}

\maketitle

\section{Introduction}
Quantum theory and relativity emerged at the beginning of the 20th century and provided explanations to at that time unsolved puzzles in physics. While the theory of relativity deals with the geometry of four-dimensional spacetime, quantum theory describes properties of matter~\cite{peres2004quantum}.
An essential feature of quantum mechanics which allows to discriminate between the ``quantum world'' and classical physics is entanglement~\cite{palge2012generation}. Generally, most of the literature analyzes the theory of entanglement in a non-relativistic setup. To account for a comprehensive understanding of phenomena related to the theory of entanglement, it is inevitable to shed light on entanglement in a relativistic regime. However, extending the concept of entanglement to a relativistic regime turns out to be non-trivial. The underlying different ontologies of quantum mechanics and relativity reveal the extent of difficulties in combining both theories. In classical theory objects (like vectors or fields) and equations of motion of particles are defined in four dimensional spacetime, where objects are expected to behave covariantly. In contrast to that, in quantum theory wave functions are not defined in spacetime but in Hilbert spaces.

 In order to analyze entanglement in a relativistic regime without the machinery of quantum field theory, Peres et.al~\cite{peres2003quantum} distinguished between \textit{primary} variables, which have relativistic transformation laws that depend only on the Lorentz transformation (e.g. the momentum of a particle), and \textit{secondary} variables which have transformation laws that do not depend solely on the Lorentz transformation but also on the momentum of the particle (e.g. the spin of a particle). Peres et.al~\cite{ peres2003quantum, peres2002quantum} presented a first step towards a relativistic extension of quantum information theory, where the authors studied single particle entanglement of a massive spin-$\frac{1}{2}$ particle. One of the key insights is that the reduced spin density matrix of a single spin-$\frac{1}{2}$ particle is not relativistically invariant.
 This is due to the fact that the reduced density matrix for secondary variables has no covariant transformation law, because the spin of the particle undergoes a Wigner rotation whose direction and  magnitude  depends  on  the  momentum  of  the  particle~\cite{peres2004quantum}. Differently stated, the Wigner rotation~\cite{wigner1939unitary} introduces correlation between the spin and the momentum degrees of freedom of the particle, which in a non-relativistic scenario can be considered without loss of generally as being independent. In strong contrast to massless particles as the photon, where the relativistic feature cannot be relaxed and leads that the spin-$1$ particle behaves effectively in some approximations as a spin-$\frac{1}{2}$ particles (polarisation/helicity). For a detailed analyses the reader is referred to Ref.~\cite{Hiesmayr2019}.  Therefore, the entropy as a measure of entanglement has no invariant meaning in special relativity.

 The Wigner rotation in quantum mechanics is closely related  to the Thomas-Wigner rotation in special relativity. Consider a massive particle moving with velocity $u$, which is viewed by a relativistic observer moving with velocity $v$. If the two velocities are not collinear, the resulting Lorentz transformation involves a pure boost and a  rotation (the Thomas-Wigner rotation), which depends on the amounts of the velocities $u$ and $v$, respectively~\cite{sexl2012relativity}. In the quantum version the Wigner rotation~\cite{wigner1939unitary} leaves the momentum invariant, but rotates the spin.

 Recently, the study of entanglement from a relativistic point of view in single-particle and multi-particle scenarios has gained increasing attention~\cite{palge2012generation,peres2002quantum,dunningham2009entanglement,gingrich2002quantum,friis2010relativistic,ren2010quantum,huber2011lorentz,li2003relativistic, Castro_Ruiz_2012}. The multi-particle case has been analyzed e.g. in Gingrich and Adami~\cite{gingrich2002quantum}, where the authors investigated the entanglement of two spin-$\frac{1}{2}$ particles between two systems. It was shown, that for a maximally entangled spin-$\frac{1}{2}$ system the amount of  entanglement decreases if observed from a moving frame of reference. Hence, a Lorentz boost induces a transfer of entanglement between degrees of freedom. The system of two massive spin-$\frac{1}{2}$ particles has been analyzed systematically for all different partitions of the subsystems in Ref.~\cite{friis2010relativistic}. Multipartite entanglement has been studied in Ref.~\cite{huber2011lorentz}, where the authors derived general conditions that have to be met for any Lorentz invariant classification of multipartite entanglement. Various applications of relativistic quantum information have been introduced, ranging from clock synchronization in Ref.~\cite{caban1999lorentz} to relativistic cryptography in Ref~.\cite{czachor2003relativistic}.

In this work we focus on a more theoretical effect of single particle entanglement in a relativistic regime. Dunningham et.~al~\cite{dunningham2009entanglement} considers single particle (modes) entanglement of a massive particle moving in two opposite directions with equal amplitudes and equal speed in a specific geometrical configuration (speeds of the boosts are chosen to be perpendicular) and an initial state which is initially not entangled in the momentum and spin partition. The authors find that the state becomes entangled in the boosted frame of reference and becomes disentangled when the speeds approach the speed of light. Their work has been generalized in Ref.~\cite{palge2012generation} in two ways: (i) by analyzing different scenarios of the boosting geometry and (ii) by assuming Gaussian states with finite width. By analyzing the corresponding scenarios, the authors found that maximally entangled states can be found in a sub-luminar regime depending on the scenario of the boosting geometry.

While the majority of papers discussing effects of relativistic entanglement employ the spin-$\frac{1}{2}$ representation of the Lorentz and Poincare groups, Ref.~\cite{Bittencourt_2019, bittencourt2020single} considered spin-$\frac{1}{2}$ particles as solutions of the Dirac equation, i.e. states are represented by Dirac bispinors. Comparing those two distinct frameworks to describe spin-$\frac{1}{2}$ particles the authors showed that the amount of entanglement in both approaches becomes identical for states with small mass.

In 1927 Heisenberg proposed his famous uncertainty relation demonstrating a universal lower bound to the uncertainties of non-commuting observables. In Ref.~\cite{deutsch1983uncertainty} his inequality was reformulated as a lower bound on the entropies of the observables (improved by~\cite{kraus1987complementary} and~\cite{maassen1988generalized}) and found applications e.g. for security analysis of quantum cryptographic protocols~\cite{coles2017entropic}.  Recently, high energetic and relativistic effects onto those entropic versions have been discussed, e.g. for neutral meson systems exhibiting the effect of $\mathcal{CP}$-violation in Ref.~\cite{di2012heisenberg} and for massive relativistic spin-$\frac{1}{2}$~\cite{bialynicki2019heisenberg}, for photons~\cite{bialynicki2012uncertainty, bialynicki2012heisenberg} and for relativistic bosons~\cite{bialynicki2021heisenberg} which are essentially different to those for fermions.

In this work we consider single massive spin-$\frac{1}{2}$ particles and analyze the entanglement associated with the spin--momentum partition for a boosted observer. We define two classes of states which are distinguished with respect to the helicity in the rest frame, namely (i) states that are superpositions of basis states with equal helicity and (ii) states that are superpositions of basis states with different helicity.  More specifically, we discuss how a change in the inertial frame affects the amount of entanglement in the boosted frame with respect to equal and unequal helicity eigenstates. Those eigenstates allow a discussion of all boosting geometries in a concise and illustrative way and, herewith, we exhaustively discuss all possible effects by a boost with respect to the change in entanglement. In particular, we find that there are two different regimes, the mid- and ultra relativistic regime, given rise to local maxima/minima in the relative entanglement.

The paper is structured as follows. Section~\ref{sec:TW_rotation} discusses the Thomas-Wigner rotation angle from the point of view of special relativity. Section~\ref{sec:Ent} analyzes the entanglement in the boosted frame of reference and reports the results. Section~\ref{sec:Res}  concludes the paper.

\section{Thomas-Wigner rotation in special relativity}\label{sec:TW_rotation}
In a physical world where one is not interested in effects related to gravitation the preferred framework is that of special relativity. One particular effect in special relativity is the Thomas-Wigner rotation which reflects the fact, that only Lorentz boosts in the same direction form a subgroup of the homogeneous Lorentz group. Two consecutive boosts with speeds $u$ and $v$ that are not collinear but enclosing the boosting angle $\phi \in [0, \pi]$ can be decomposed into a pure boost and a pure rotation.
Different approaches to derive this Thomas-Wigner rotation angle $\delta$, ranging from elementary calculations based on the composition of Lorentz boosts up to more involved methodologies based on spinors are presented in Ref.~\cite{o2011elementary}. Perhaps the most well-known formula to compute the Wigner rotation angle is given by~\cite{sexl2012relativity}
\begin{align}
\cos \delta +1 = \frac{(1+ \gamma_u + \gamma_v + \gamma_u\gamma_v(1+ uv \cos \phi))^2}{(\gamma_u +1)(\gamma_v +1)(\gamma_u\gamma_v(1+ uv \cos \phi) +1)}\;. \label{Eq:Wig1}
\end{align}
This equation implies in the relativistic limit of both the particle's and the observer's speed approaching the speed of light that $\delta \to \phi$, which highlights the crucial impact of the boost geometry on the Thomas-Wigner rotation (visualized in Fig.\ref{fig:Wigner_limit}(a)).
\begin{figure*}
    \centering
    \begin{subfigure}[c]{0.3\textwidth}
	\includegraphics[scale=0.65]{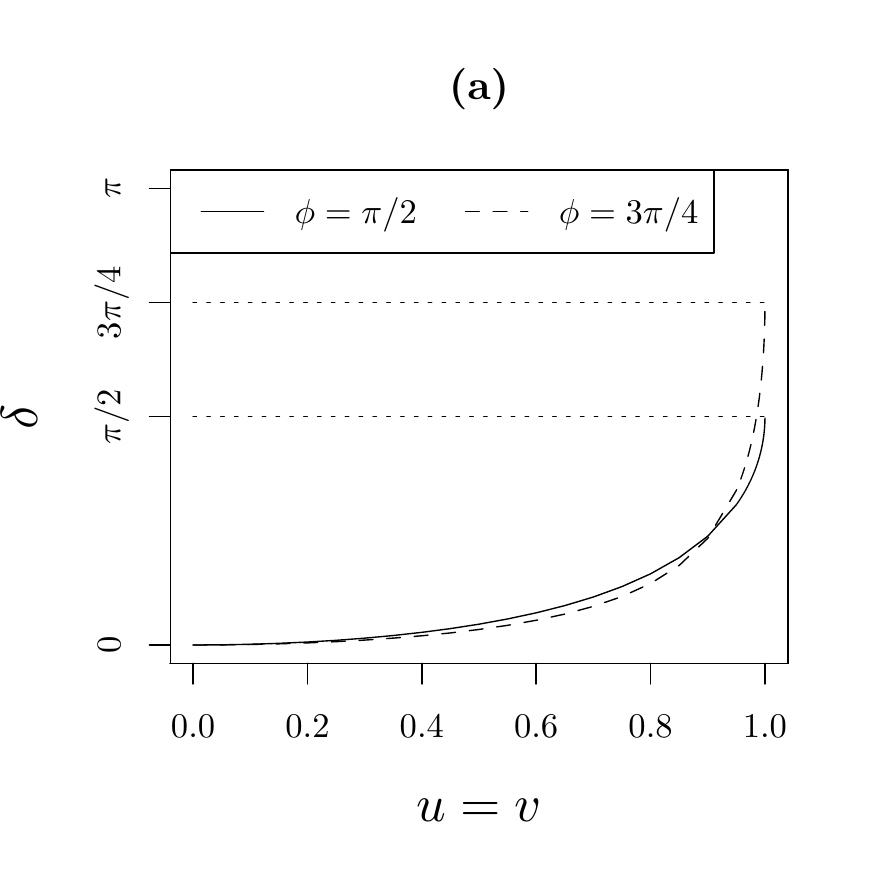}
    \end{subfigure} \hfill
    \begin{subfigure}[c]{0.3\textwidth}
	\includegraphics[scale=0.65]{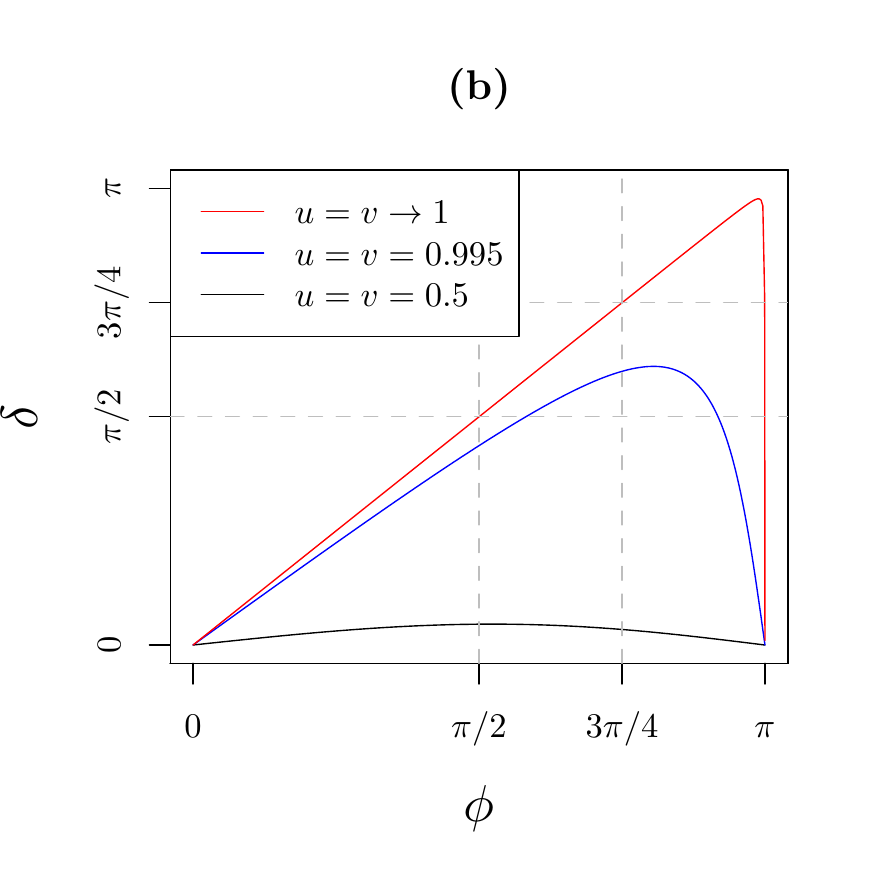}
    \end{subfigure}\hfill
    \begin{subfigure}[c]{0.3\textwidth}
	\includegraphics[scale=0.65]{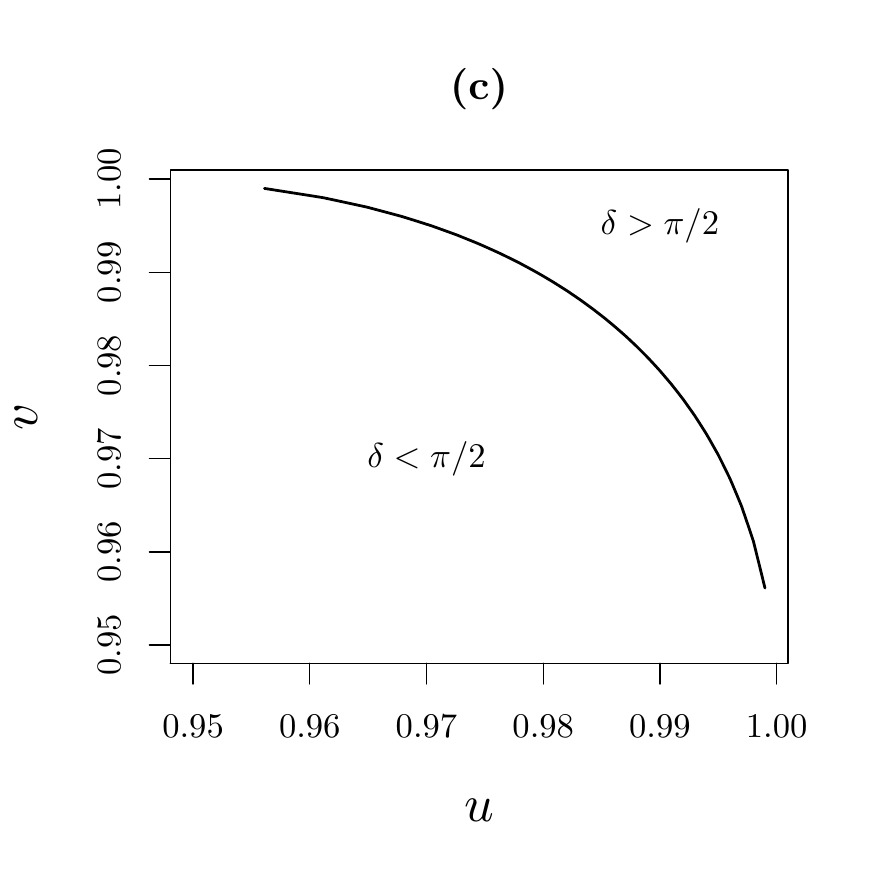}
    \end{subfigure}
    \caption{(a) shows the Wigner rotation angle as a function of the particle's and the observer's speed and different parameter values of the boosting angle for equal speeds $u=v$. Fig.(b) shows the Wigner rotation as a function of the boosting angle for different values of the speeds, where $u=v$. Fig.(c) demonstrates the parameter range of the speeds that refer to Wigner rotation angles $\delta > \pi/2$, where $\phi = 3\pi/4$ is chosen.}
    \label{fig:Wigner_limit}
\end{figure*}
In Ref.~\cite{rhodes2004relativistic} the authors presented a convenient (equivalent) formula of the Wigner rotation angle
\begin{align} \label{Eq:Wig2}
\tan \frac{\delta}{2} = \frac{\sin \phi}{\cos \phi + \mathcal{D}},
\end{align}
where
\begin{align} \label{eq:D}
\mathcal{D}= \sqrt{\frac{(\gamma_u +1)(\gamma_v +1)}{(\gamma_u -1)(\gamma_v -1)}} \geq 1
\end{align}
captures the full dependence on the speeds $u$ and $v$, respectively. Let us now analyze the Wigner rotation angle as a function of the boosting angle, for fixed speeds $u$ and $v$. An interesting characteristic of the Wigner rotation angle is, that it admits a unique maximum at the boosting angle
\begin{align} \label{eq:argmax}
\phi^* &= \arccos\left( -\frac{1}{\mathcal{D}} \right).
\end{align}
This can be seen by isolating $\delta$ in \eqref{Eq:Wig2} and setting the first derivative to zero. Uniqueness follows from the fact, that $\delta(\phi)$ is globally concave. The maximum effect of the Wigner rotation occurs at a boosting angle $\phi \geq \pi/2$, see Fig.\ref{fig:Wigner_limit}(b). The maximum effect of the Wigner rotation does not occur when the boosts are chosen perpendicular, but when the boosting directions are chosen slightly anti-aligned.
Furthermore, we note that for the special boosting geometry $\phi = \pi/2$, the Wigner rotation angle $\delta = \pi/2$ can only achieved in the theoretical limit of both speeds approaching the speed of light. In contrast to that, for a boosting angle $\phi \geq \pi/2$ values of the Wigner rotation angle $\delta \geq \pi/2$ can be achieved in a sub-luminar regime. From \eqref{Eq:Wig2} we find that parameter values $u,v$ and $\phi$ satisfying
\begin{align}
    1- \sin(2 \phi) \geq \mathcal{D}^2
\end{align}
refer to the Wigner rotation angle $\delta \geq \pi/2$, see Fig.\ref{fig:Wigner_limit}(c). However, as it is illustrated in Fig.\ref{fig:Wigner_limit} Wigner rotation angles $\delta > \pi/2$ are observed in the ultra-relativistic regime, whereas $\delta< \pi/2$ can be observed in the mid-relativistic regime.

\section{Entanglement of a massive particle}\label{sec:Ent}
In this section we discuss first the state space of a single massive particle and the relation between the outer degrees of freedom (momentum) and the inner degrees of freedom (spin). Then we proceed in defining the entanglement in the rest frame and in the boosted frame.

\subsection{Setting the stage}
The quantum mechanical description of a quantum particle with spin $s$ propagating in the real $3$-dimensional space $\mathbb{R}^{3}$ with momentum $\vett{p}$ takes place in the Hilbert space
\begin{eqnarray}
\label{theextendedspace}
\mathcal{H}
 = &&\bigg\{ \bgreek{\psi}(\vett{p}) : \bgreek{\psi}(\vett{p}) = \sum_{i=1}^3 f^i(\vett{p}) \otimes \vett{e}_i \, \biggl{|} \nonumber\\ &&\vett{e}_i\in\mathbb{C}^{2s+1}\;\textrm{and}\;\,
f^i(\vett{p}) \in \mathcal{L}^2\left( \mathbb{R}^3,\frac{d^3\vett{p}}{|\vett{p}|}\right)
\bigg\},
\end{eqnarray}
which is isomorphic to $\simeq \mathcal{L}^2\left( \mathbb{R}^3,\frac{d^3\vett{p}}{|\vett{p}|}\right) \otimes \mathbb{C}^{2s+1}$. The very fact that one can or has to distinguish between the outer degrees of freedom (momentum) and the inner degrees of freedom (spin) implies a tensor product structure of the underlying Hilbert space and therefore the states can be factorized, i.e. a notion of separability/entanglement is straightforwardly superimposed. This substructure is the one that we analyse from a relativistic observer point of view in the following sections.

However, in general the internal spin space cannot be considered totally independent of the propagation direction of the particles, which e.g. manifests in the famous spin-orbital couplings. In mathematical terms this means that (i) the spatial/momentum wave functions have to be related to the momenta $\vett{e}_i\longrightarrow \vett{e}_i(\vett{p})$ and (ii) a connection of rotations in $\mathbb{R}^3$ and rotations along the three spatial directions $i$ is associated to ${\vett{e}}_i(\vett{p})$, namely
\begin{eqnarray}\label{rotation}
{\mathrm R}& =& \hat{V}^{\dagger}\, e^{-\frac{i}{\hbar}\, \varphi\, \vett{n} \cdot \vett{\hat{S}}}\, \hat{V},
\end{eqnarray}
where $\mathrm{R}\in SO(3)$ are (real) rotation matrices describing a rotation in the real $3$-dimensional space $\mathbb{R}^3$, $\vett{\hat{S}}$ are the generators of $SO(3)$ vector rotations a subgroup of SU(N) with $N=2s+1$, $\hat{V}$ are proper unitary matrices chosen such that $\mathrm{R}$ is real and $\vett{n}$ describes the rotation axis chosen  in $\mathbb{R}^3$ and $\varphi$ the angle. For spin-$\frac{1}{2}$ there exists an one-to-one correspondence between the rotations  $\mathrm{R}\in SO(3)$ and the rotations in the spin space generated by elements of $SU(2)$, the three famous Pauli matrices. For higher spins there are more generators of $SU(N)$, i.e. $(2s+1)^2-1$. For instance, in the case of $s=1$-particles there are eight hermitian matrices, the Gell-Mann matrices, here e.g. the three antisymmetric Gell-Mann matrices with $\hat{V}=\mathbbm{1}_3$ are a proper choice for the spin operator $\vett{\hat{S}}$, a $SU(2)$ subgroup of $SU(3)$ with a one-to-one correspondence to the rotations $\mathrm{R} \in SO(3)$.

A convenient choice of the three vectors $\{\vett{e}_i(\vett{p})\}$ is to identify one with the direction of propagation direction, e.g. with the third component, namely $\vett{e}_3(\vett{p})=\frac{\vett{p}}{|\vett{p}|}$ (by that also choosing w.l.o.g. a real vector in $\mathbb{C}^{2s+1}$). Requiring an orthonormal coordinate system this further implies $\vett{e}_1(\vett{p}) \times \vett{e}_2(\vett{p}) = \vett{e}_3(\vett{p})$. Applying this rule to all momenta, the above defined Hilbert space $\mathcal{H}$ can be equipped with a positive-definite scalar product of the form
\begin{eqnarray}
\langle {\bgreek{\phi}} |  {\bgreek{\psi}}\rangle:=
\int \frac{d^3\vett{p}}{|\vett{p}|} \, &&(
g^{1}(\vett{p})^* \, f^1(\vett{p}) +
g^{2}(\vett{p})^* \, f^2(\vett{p}) \nonumber \\
&&+g^{3}(\vett{p})^* \, f^3(\vett{p}))\;,
\end{eqnarray}
since $\vett{e}^{*}_i(\vett{p})\cdot \vett{e}_j(\vett{p})=\delta_{i,j}$ for all $\vett{p}$.

For instance, if considering massless  spin-$1$ quantum particles such as photons further constrains as the Maxwell's equations have to be satisfied which result in a folding of the momentum and spin degrees of freedom leading to the concept of polarisation and helicity. From the quantum information theoretic perspective this can be handled in a concise and consistent unified way and was presented in Ref.~\cite{Hiesmayr2019}.

In this contribution we focus on massive spin-$\frac{1}{2}$ particles, where no per se correlations between the degrees of freedom from momentum and spin account, however, as we show in the following changing the reference frame a correlation between those degrees of freedom for certain initial states are generated, when the reference frame is boosted. For a single massive spin-$\frac{1}{2}$ particle the spin can only take two distinct values  $s \in \{ \uparrow, \downarrow \}$ and the four momentum is given by $p^\mu = (p^0=|\vett{p}|, \mathbf{p})^T$, with $p^\mu p_\mu = (p^0)^2 - \mathbf{p}^2 =m^2$, where $m>0$ denotes the mass of the particle. From now on we switch to the Dirac notations and an arbitrary state is denoted by $|\psi\rangle=\ket{p}\otimes \ket{s}\equiv \ket{p s}$.

We consider sufficiently peaked (i.e. sharp) momentum states and assume that $p^\mu$ can only take two values $p_+$ and $p_-$, which are assumed to be in opposite directions, i.e. along the same axis.
 Sharp momenta are modeled via delta-distributions in the momentum space and result in single Wigner rotations.
This assumption allows us to regard the system of a single spin-$\frac{1}{2}$ particle as a two-qubit system. In particular, the possible momentum states $p_\pm$ of the particle in its rest frame $\Sigma$ are chosen along the $\pm z-$axis, which is also assumed to be the spin-quantization axis. Without loss of generality the boost of the moving observer $\Sigma^\prime$ is in the $x-z$ plane and the boosting direction encloses an angle $\phi$ with the $z-$axis. Therefore, the Wigner rotation results in a rotation with angle $\delta$ around the $y-$axis. This setup is illustrated in Fig.\ref{fig:setup}.
\begin{figure}
    \centering
    \includegraphics[width=0.6\textwidth]{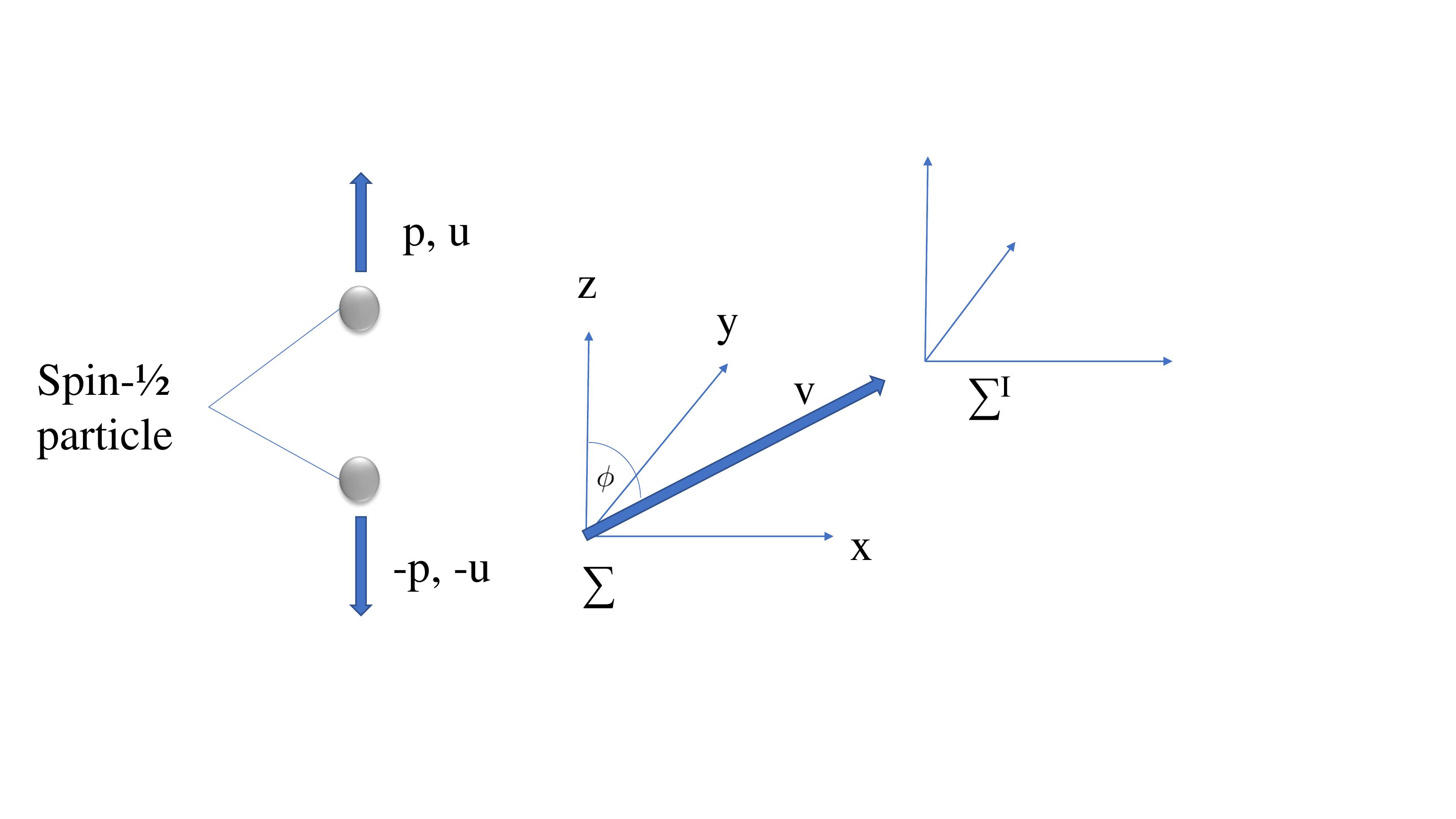}
    \caption{This figure illustrates the setup, where the particle travels along the $z$-axis in the rest frame $\Sigma$ and the moving observer $\Sigma^\prime$ is boosted with velocity $v$ along an axis which encloses the boosting angle $\phi$ with the $z-$axis.}
    \label{fig:setup}
\end{figure}

\subsection{Entanglement in the rest frame $\Sigma$}
In our analysis we investigate states in the rest frame that are linear combinations of equal helicity as well as a state with unequal helicity.
Without loss of generality our initial spin-momentum state in its rest frame $\Sigma$ in case of equal helicity components is defined with $\eta \in [0, 2\pi)$ via
 \begin{align} \label{eq:initialState}
\ket{\psi}_\Sigma = \cos \eta \ket{p_+, \uparrow} + \sin \eta \ket{p_-, \downarrow}\;.
\end{align}
This state represents a combination of states with helicity $h=1$. However, as we demonstrate later the state   $\ket{\tilde{\psi}}_\Sigma = \cos \eta \ket{p_+, \downarrow} + \sin \eta \ket{p_-, \uparrow}$ constituted by a linear combination of states with $h=-1$ gives the same results. In case of a state with unequal helicity we consider the state
\begin{eqnarray}\label{eq:initialStatehm1}
    \ket{\xi}_\Sigma &&=\cos \eta \ket{p_+, \uparrow} + \sin \eta \ket{p_-, \uparrow} \nonumber\\
    &&= (\cos \eta \ket{p_+} + \sin \eta \ket{p_-}) \otimes \ket{\uparrow},
\end{eqnarray}
which is initially not entangled in the rest frame.

In order to quantify the amount of entanglement distributed within the spin momentum partition of the state in the rest frame $\Sigma$, we use the von Neumann entropy $E=-\sum p_i \log_2 p_i$ (any Renyi entropy does the job since we are dealing only with pure states). The entanglement $E^\psi_\Sigma(\eta)$ depends in its rest frame $\Sigma$ only on $\eta$ and derives to
\begin{align}\label{eq:Entanglement_Rest_Frame}
E^{\tilde{\psi}}_\Sigma(\eta) = E^\psi_\Sigma(\eta) = -\cos^2 \eta \log(\cos^2 \eta) - \sin^2 \eta \log(\sin^2 \eta)\;.
\end{align}
Therefore, the state-preparation parameter $\eta$ induces initial entanglement within the spin-momentum partition in the particle's rest frame $\Sigma$ if $\eta \not= n \frac{\pi}{2}$, $n \in \mathbb{N}$. For $\eta = (2n+1) \frac{\pi}{4}$, $n \in \mathbb{N}$ we recover the maximally entangled Bell states
\begin{align}\ket{\Psi^{\pm}} = \frac{1}{\sqrt{2}}(\ket{p_+, \uparrow} \pm  \ket{p_-, \downarrow})\;.\end{align}
For the state with equal helicity \eqref{eq:initialStatehm1} we find $E^\xi_\Sigma =0$, irrespective of the value $\eta$.
\subsection{Entanglement in the boosted frame $\Sigma^\prime$}
In order to investigate how the amount of entanglement changes when observed from a moving observer, we consider setups including a non-vanishing Wigner rotation, i.e. the boosts are not collinear $\phi \neq 0$.  The appropriate relativistic transformation of quantum states is the Wigner rotation which only acts on the spin degree of freedom introduced by Wigner~\cite{wigner1939unitary}.
In case of massive particles with four-momentum $p^\mu = (m, \boldsymbol{0})^T$, the elements of Wigner's little group are the 3D rotations. Therefore, a suitable representation of the Wigner's little group is a unitary matrix $SU(2)$ represented by
\begin{align} \label{eq:repr}
U = \mathbf{1} \cos \frac{\delta}{2} + i (\boldsymbol{\sigma}\cdot \boldsymbol{n}) \sin \frac{\delta}{2},
\end{align}
where $\mathbf{n}$ denotes the axis of rotation, i.e. $\mathbf{n}$ is a 3D unit vector.
Therefore, the spin-momentum state in the boosted frame of reference $\Sigma^\prime$ is given by
\begin{align}
\ket{\chi}_{\Sigma^\prime} = U(\Lambda, p_\pm) \ket{\chi}_\Sigma,
\end{align}
for all states $\chi \in \{\psi, \tilde{\psi}, \xi \}$ considered, where $U(\Lambda, p_\pm)$ is the unitary representation of the Wigner rotation on the state in its rest frame. The matrices performing the rotation of the spin with the rotation axis $\mathbf{n}=(0,1,0)^T$ for the momenta $p_\pm$ are given by
\begin{align}
U_\pm =
\begin{pmatrix}
\cos \frac{\delta}{2} & \pm \sin \frac{\delta}{2} \\
\mp \sin \frac{\delta}{2} & \cos \frac{\delta}{2}
\end{pmatrix}.
\end{align}
\subsubsection{The case of equal helicity}
Let us first consider the state with helicity $h=1$, i.e. the state $\ket{\psi}_\Sigma$. The calculation of the boosted state in the moving frame of reference $\Sigma^\prime$ is
\begin{eqnarray} \label{eq:state_boosted}
\ket{\psi}_{\Sigma^\prime} &&= \cos \eta \cos \frac{\delta}{2} \ket{p_+^\prime, \uparrow} -\cos \eta \sin \frac{\delta}{2} \ket{p_+^\prime, \downarrow} \nonumber \\
&&- \sin \eta	 \sin \frac{\delta}{2} \ket{p_-^\prime, \uparrow} + \sin \eta \cos \frac{\delta}{2} \ket{p_-^\prime, \downarrow},
\end{eqnarray}
where $p^\prime_\pm = \Lambda p_\pm$ denotes the transformed four-momentum of the particle.

The amount of entanglement in the moving frame of reference derives to
\begin{eqnarray}\label{Eq:Wigner_moving}
E^\psi_{\Sigma^\prime} &&(\delta) =-\frac{1}{2} \bigg( 1 + \sqrt{\cos^2 (2\eta) + \sin^2 \delta \sin^2 (2\eta)} \bigg) \nonumber \\
 &&\times\log_2\left( \frac{1}{2} \left( 1 + \sqrt{\cos^2 (2\eta) + \sin^2 \delta \sin^2 (2\eta)} \right)\right) \nonumber\\
&&-  \frac{1}{2} \left( 1 - \sqrt{\cos^2 (2\eta) + \sin^2 \delta \sin^2 (2\eta)} \right) \nonumber \\
 &&\times\log_2\left( \frac{1}{2} \left( 1 - \sqrt{\cos^2 (2\eta) + \sin^2 \delta \sin^2 (2\eta)} \right)\right).
\end{eqnarray}
By continuity of the Wigner rotation, we should recover the entanglement in the rest frame in the limit  $\lim_{\delta \to 0} E^\psi_{\Sigma^\prime}(\delta)= E^\psi_\Sigma$, which is obviously the case.

Let us now consider the state with helicity $h=-1$, i.e. the state $\ket{\tilde{\psi}}$. Similar to the case considered above we find that the state in the moving frame of reference $\Sigma^\prime$ is
\begin{eqnarray}
    \ket{\tilde{\psi}}_{\Sigma^\prime} &&= \cos \eta \sin \frac{\delta}{2} \ket{p_+^\prime, \uparrow} +\cos \eta \cos \frac{\delta}{2} \ket{p_+^\prime, \downarrow} \nonumber \\
    &&+\sin \eta \cos \frac{\delta}{2}  \ket{p_-^\prime, \uparrow} +\sin \eta \sin \frac{\delta}{2} \ket{p_-^\prime, \downarrow}\;.
\end{eqnarray}
Due to the fact, that the state $\ket{\psi}_{\Sigma^\prime}$ can be transformed via a local unitary operation into the state $\ket{\tilde{\psi}}_{\Sigma^\prime}$, i.e. $\ket{\tilde{\psi}}_{\Sigma^\prime} = -i \sigma_Z \otimes \sigma_Y \ket{\psi}_{\Sigma^\prime}$, the amount of entanglement in both cases is the same $E^{\tilde{\psi}}_{\Sigma^\prime}(\delta)= E^\psi_{\Sigma^\prime}(\delta)$. Therefore, both states can be considered as equivalent in terms of their entanglement properties.
\subsubsection{The case of unequal helicity}
Let us now consider the case of $\ket{\xi}_\Sigma$, i.e. the state being in a superposition of unequal helicity states. Although the state is initially not entangled in the rest frame, the Wigner rotation introduces correlation in the spin-momentum partition. We find, that the state in the moving frame of reference $\Sigma^\prime$ is
\begin{eqnarray}
    \ket{\xi}_{\Sigma^\prime} &&= \cos \eta \cos \frac{\delta}{2} \ket{p_+^\prime, \uparrow} -\cos \eta \sin \frac{\delta}{2} \ket{p_+^\prime, \downarrow} \nonumber \\
    && +\sin \eta \cos \frac{\delta}{2}  \ket{p_-^\prime, \uparrow} +\sin \eta \sin \frac{\delta}{2} \ket{p_-^\prime, \downarrow}\;.
\end{eqnarray}
Comparing this state with the state $\ket{\psi}_{\Sigma^\prime}$ \eqref{eq:state_boosted} we find that we can reproduce $\ket{\xi}_{\Sigma^\prime}$ by applying a controlled-$U$ gate with $U= i \sigma_Y$, i.e. basically a $\mathrm{CNOT}$ gate with an additional phase shift. The amount of entanglement in the moving frame of reference derives to
\begin{eqnarray}
E^\psi_{\Sigma^\prime}&&(\delta) = -\frac{1}{2} \left( 1 + \sqrt{\cos^2 (2\eta) + \cos^2 \delta \sin^2 (2\eta)} \right) \nonumber\\
 &&\times\log_2\left( \frac{1}{2} \left( 1 + \sqrt{\cos^2 (2\eta) + \cos^2 \delta \sin^2 (2\eta)} \right)\right)\nonumber \\
&&-  \frac{1}{2} \left( 1 - \sqrt{\cos^2 (2\eta) + \cos^2 \delta \sin^2 (2\eta)} \right) \nonumber \\
 &&\times\log_2\left( \frac{1}{2} \left( 1 - \sqrt{\cos^2 (2\eta) + \cos^2 \delta \sin^2 (2\eta)} \right)\right).
\end{eqnarray}
\section{Results}\label{sec:Res}
Given a state $\ket{\chi}$ with a given entanglement $E^\chi_\Sigma(\eta)$, where $\chi \in \{\psi, \tilde{\psi}, \xi \}$ we observe that the change in the boosted frame $\Sigma'$ does only dependent on the Wigner rotation angle $\delta$. Moreover, we find a monotonic behaviour of the amount of entanglement in the boosted frame with respect to a change in the Wigner rotation angle. Surprisingly, the monotonic behaviour is different when considering states with equal helicity and states with unequal helicity. More specifically, the find that the monotonic behaviour is inverted.
\begin{proposition} \label{prop:1}
For any state $\ket{\psi}_\Sigma$ in the rest frame $\Sigma$ the amount of entanglement $E^\psi_{\Sigma^\prime}$ in the moving frame is:
\begin{enumerate}[label=(\roman*)]
    \item decreasing with the Wigner rotation angle $\mathrm{d} E^\psi_{\Sigma^\prime}/ \mathrm{d} \delta \leq 0$ for $\delta \in (0, \pi/2)$
    \item increasing with the Wigner rotation angle $\mathrm{d} E^\psi_{\Sigma^\prime}/ \mathrm{d} \delta \geq 0$ for $\delta \in (\pi/2, \pi)$.
\end{enumerate}
For any state $\ket{\xi}_\Sigma$ in the rest frame $\Sigma$ the amount of entanglement $E^\xi_{\Sigma^\prime}$ in the moving frame is:
\begin{enumerate}[label=(\roman*)]
    \item increasing with the Wigner rotation angle $\mathrm{d} E^\xi_{\Sigma^\prime}/ \mathrm{d} \delta \geq 0$ for $\delta \in (0, \pi/2)$
    \item decreasing with the Wigner rotation angle $\mathrm{d} E^\xi_{\Sigma^\prime}/ \mathrm{d} \delta \leq 0$ for $\delta \in (\pi/2, \pi)$.
\end{enumerate}
\end{proposition}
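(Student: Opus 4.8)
The plan is to exploit that, in both cases, the von Neumann entropy in $\Sigma^\prime$ depends on $\eta$ and $\delta$ only through a single real number, the spectral gap of the reduced two-by-two density matrix. From the boosted states one reads off that the reduced spin (equivalently momentum) state has eigenvalues $\tfrac12(1\pm R)$ with $R=R_\psi(\delta):=\sqrt{\cos^2(2\eta)+\sin^2\delta\,\sin^2(2\eta)}$ for $\ket{\psi}_\Sigma$ and $R=R_\xi(\delta):=\sqrt{\cos^2(2\eta)+\cos^2\delta\,\sin^2(2\eta)}$ for $\ket{\xi}_\Sigma$, so that $E^\chi_{\Sigma^\prime}=\mathfrak h(R)$ with $\mathfrak h(R):=-\tfrac{1+R}{2}\log_2\tfrac{1+R}{2}-\tfrac{1-R}{2}\log_2\tfrac{1-R}{2}$. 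First I would record the elementary bounds $0\le R\le 1$ (here $R\le 1$ because $\sin^2\delta,\cos^2\delta\le 1$ and $\cos^2(2\eta)+\sin^2(2\eta)=1$), so that $\mathfrak h$ is well defined on the relevant range.

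Next I would establish the monotonicity of $\mathfrak h$: differentiating gives $\mathfrak h'(R)=\tfrac12\log_2\!\big(\tfrac{1-R}{1+R}\big)$, which is $\le 0$ for $R\in[0,1)$ and vanishes only at $R=0$; intuitively, $R$ measures how far the spectrum sits from the maximally mixed one, so the entropy is a decreasing function of $R$. It then remains to sign $\mathrm dR/\mathrm d\delta$, and this is cleanest at the level of $R^2$: a one-line computation yields $\tfrac{\mathrm d}{\mathrm d\delta}R_\psi^2=\sin^2(2\eta)\sin(2\delta)$ and $\tfrac{\mathrm d}{\mathrm d\delta}R_\xi^2=-\sin^2(2\eta)\sin(2\delta)$. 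Since $\sin(2\delta)>0$ on $(0,\pi/2)$ and $\sin(2\delta)<0$ on $(\pi/2,\pi)$, and $R\mapsto R^2$ is increasing, $R_\psi$ is nondecreasing on $(0,\pi/2)$ and nonincreasing on $(\pi/2,\pi)$, while $R_\xi$ does the opposite. Feeding this into the chain rule $\mathrm dE^\chi_{\Sigma^\prime}/\mathrm d\delta=\mathfrak h'(R)\,\mathrm dR/\mathrm d\delta$ and using $\mathfrak h'\le 0$ gives exactly the four claimed inequalities; the statement for $\ket{\tilde\psi}_\Sigma$ is automatic because $\ket{\tilde\psi}_{\Sigma^\prime}$ is a local unitary image of $\ket{\psi}_{\Sigma^\prime}$, so $E^{\tilde\psi}_{\Sigma^\prime}=E^\psi_{\Sigma^\prime}$.

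I do not expect a genuine obstacle here; the only care needed is with the boundary/degenerate configurations, which I would dispatch in a short remark. If $\sin(2\eta)=0$ then $R$ is constant and $\mathrm dE^\chi_{\Sigma^\prime}/\mathrm d\delta\equiv 0$, consistent with both weak inequalities. If $\cos(2\eta)=0$ (the rest frame being a Bell state, relevant for $\ket{\psi}_\Sigma$) then $R_\psi=|\sin\delta|$, which equals $0$ only at the excluded endpoints $\delta\in\{0,\pi\}$, so $\mathfrak h'(R)<0$ strictly on the open intervals; and at $\delta=\pi/2$ one has $R\to 1$ with $\mathfrak h'(R)\to-\infty$, but $\mathrm dR/\mathrm d\delta\to 0$ quickly enough that $\mathrm dE^\chi_{\Sigma^\prime}/\mathrm d\delta\to 0$, so $\delta=\pi/2$ is precisely the smooth turning point separating the two monotonicity regimes — the local extremum advertised in the introduction. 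The real content of the proof is thus the observation that $E^\chi_{\Sigma^\prime}$ factors through the single scalar $R$; everything after that is a sign-bookkeeping exercise.
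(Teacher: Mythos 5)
Your proposal is correct and follows essentially the same route as the paper's proof: both factor the entropy through the single scalar eigenvalue $p=\tfrac{1}{2}(1+R)$, compute $\mathrm{d}E/\mathrm{d}\delta$ by the chain rule, and sign the two factors (your $\mathfrak h'(R)\le 0$ is the paper's $\ln\bigl(\tfrac{1-p}{p}\bigr)\le 0$ for $p\ge\tfrac12$, and your $\mathrm{d}R^2/\mathrm{d}\delta=\pm\sin^2(2\eta)\sin(2\delta)$ is the paper's $p'(\delta)$). Your explicit treatment of the $\ket{\xi}_\Sigma$ case and of the degenerate configurations $\sin(2\eta)=0$, $\cos(2\eta)=0$ and $\delta=\pi/2$ is a welcome addition the paper leaves implicit.
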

\begin{proof}
We prove the results for the state $\ket{\psi}_\Sigma$, the state $\ket{\xi}_\Sigma$ can be treated in a similar way. The amount of entanglement distributed within the spin--momentum partition in the moving frame of reference is given by \eqref{Eq:Wigner_moving}. We define
\begin{align}
p(\delta) = \frac{1}{2} \left( 1 + \sqrt{\cos^2 (2\eta) + \sin^2 \delta \sin^2 (2\eta)} \right)
\end{align}
and rewrite the logarithm  $\log_2{x} = \ln x / \ln 2$. The derivative with respect to the Wigner rotation angle is
\begin{align} \label{eq:Wigner_der}
\begin{split}
\frac{\mathrm{d} E_{\Sigma^\prime}(\delta)}{\mathrm{d} \delta}
&= \frac{1}{\ln 2}\frac{\mathrm{d}}{\mathrm{d} \delta} \left( -p(\delta) \ln(p(\delta)) - (1-p(\delta)) \ln(1-p(\delta)) \right) \\
&= \frac{1}{\ln 2} p^\prime(\delta) \ln \left( \frac{1-p(\delta)}{p(\delta)} \right) \\
&= \frac{1}{\ln 2} \dfrac{\sin(2\delta) \sin^2(2\eta)}{2\sqrt{\cos^2 (2\eta) + \sin^2 \delta \sin^2 (2\eta)}}  \ln \left( \frac{1-p(\delta)}{p(\delta)} \right).
\end{split}
\end{align}
Let us first analyze the case $\delta \in (0, \pi/2)$. To observe that the entanglement is decreasing, we observe that $p > 1/2$. Therefore, $\ln((1-p)/p) < 0$ and we have that $\mathrm{d} E_{\Sigma^\prime}/\mathrm{d}\delta \leq 0$
for $\delta \in (0, \pi/2)$.
The case $\delta \in (\pi/2, \pi)$ is treated in a similar way and we find that $\mathrm{d} E_{\Sigma^\prime}/\mathrm{d}\delta \geq 0$ holds true in this range.
\end{proof}
The inequalities in Proposition \ref{prop:1} are strict, whenever (i) the Wigner rotation angle $\delta \notin \{0, \pi/2, \pi \}$ and (ii) $\eta\not=n\; \frac{\pi}{2}$ (the spin-momentum state $\ket{\psi}_\Sigma$ in the rest frame is not separable or the momentum state of $\ket{\xi}_\Sigma$ not in a superposition of momentum states).

Let us now discuss these parameter values of the Wigner rotation angle for the state $\ket{\psi}_\Sigma$ in more detail. The scenario of a vanishing Wigner rotation angle $\delta=0$ corresponds to the trivial case of either the particle's or the observer's speed vanish. The case $\delta= \pi$ is also not of major interest due to the fact that it can only be obtained in the relativistic limit $u,v \to 1$. The most interesting parameter value is the threshold value of $\delta = \pi/2$, where the  monotonic behaviour of the entanglement in the boosted frame is inverted. For this specific parameter value, the spin--momentum state $\ket{\psi}_\Sigma$ in the boosted frame of reference is given by
\begin{eqnarray}
\ket{\psi}_{\Sigma^\prime} &&= \frac{\cos \eta}{\sqrt{2}} ( \ket{p_+^\prime, \uparrow} -\ket{p_+^\prime, \downarrow})+ \frac{\sin \eta}{\sqrt{2}}  ( -\ket{p_-^\prime, \uparrow} +  \ket{p_-^\prime, \downarrow}) \nonumber\\
&&= (\cos \eta\ket{p_+^\prime} - \sin \eta \ket{p_-^\prime}) \otimes \frac{1}{\sqrt{2}}(\ket{\uparrow}- \ket{\downarrow}).
\end{eqnarray}
Therefore, at the threshold value $\delta= \pi/2$ the state in the boosted frame of reference becomes disentangled irrespective of the state preparation in the rest frame, i.e. if the state was entangled or not.

In case of the state $\ket{\xi}_\Sigma$ we find similar characteristics. For $\delta=0$ and $\delta= \pi/2$ the entanglement is the same as in the rest frame. For $\delta=\pi/2$ we find that
\begin{align}
    \ket{\xi}_{\Sigma^\prime} &&= \frac{\cos \eta}{\sqrt{2}}  (\ket{p_+^\prime, \uparrow} - \ket{p_+^\prime, \downarrow})+ \frac{\sin \eta}{\sqrt{2}}  ( \ket{p_-^\prime, \uparrow}+ \ket{p_+^\prime, \downarrow}),
\end{align}
which has the reduced density matrix
\begin{align}
 \rho =   \begin{pmatrix}
    \cos^2 \eta & 0 \\
    0 & \sin^2 \eta
    \end{pmatrix} \;.
\end{align}
Hence, whenever $\eta= \pi/4$ we recover the maximally mixed state. Whereas in the case of equal helicity a Wigner rotation angle $\delta = \pi/2$ disentangles the state, in the case of unequal helicity we can observe maximum entanglement at $\delta= \pi/2$.

As it is demonstrated in Fig.\ref{fig:Wigner_limit}, Wigner rotation angles $\delta \geq \pi/2$ can only be achieved in the ultra-relativistic regime for an appropriate boosting angle $\phi \geq \pi/2$, i.e. the boosting angle has to be chosen large enough. However, for a wide range of parameter values in the mid-relativistic regime the threshold value of the Wigner rotation angle cannot be attained. This is of special interest, when the entanglement in the boosted frame as a function of the boosting angle becomes $\tilde{E}^\chi_{\Sigma^\prime}(\phi) = E^\chi_\Sigma(\delta(\phi))$ and the mid-relativistic regime is considered. The monotonic behaviour of the Wigner rotation angle as a function of the boosting angle carries over to the properties of the entanglement in the boosted frame as a function of the boosting angle. Hence, in case of $\ket{\psi}_\Sigma$ the minimum of the amount of entanglement is observed whenever the effect of the Wigner rotation is maximal, whereas for $\ket{\xi}_\Sigma$ the maximum of the amount of entanglement is observed whenever the effect of the Wigner rotation is maximal.
\begin{figure*}
\begin{center}
    \begin{subfigure}[c]{0.3\textwidth}
    \includegraphics[scale=0.65]{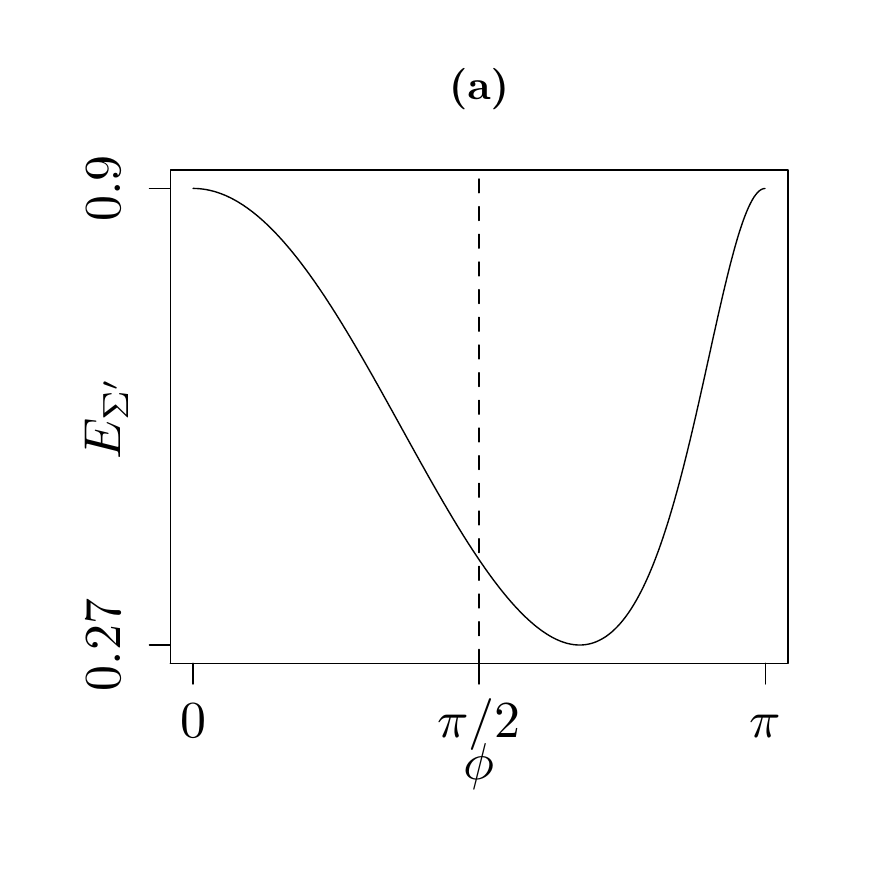}
    \end{subfigure}\hfill
    \begin{subfigure}[c]{0.3\textwidth}
	\includegraphics[scale=0.65]{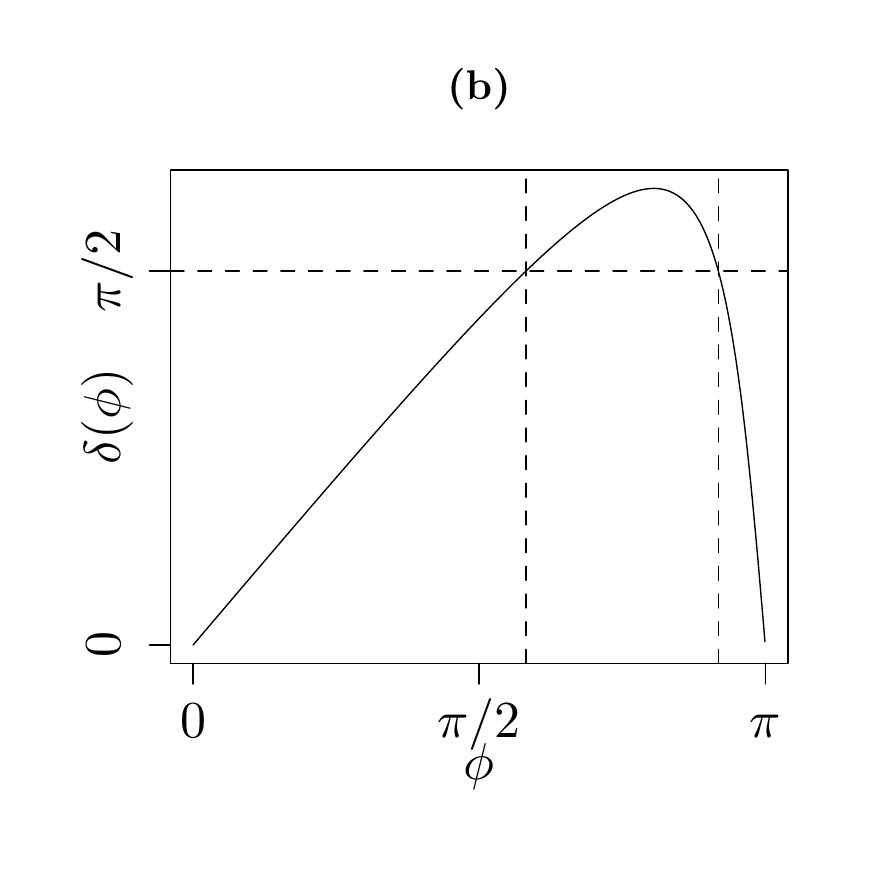}
    \end{subfigure} \hfill
    \begin{subfigure}[c]{0.3\textwidth}
	\includegraphics[scale=0.65]{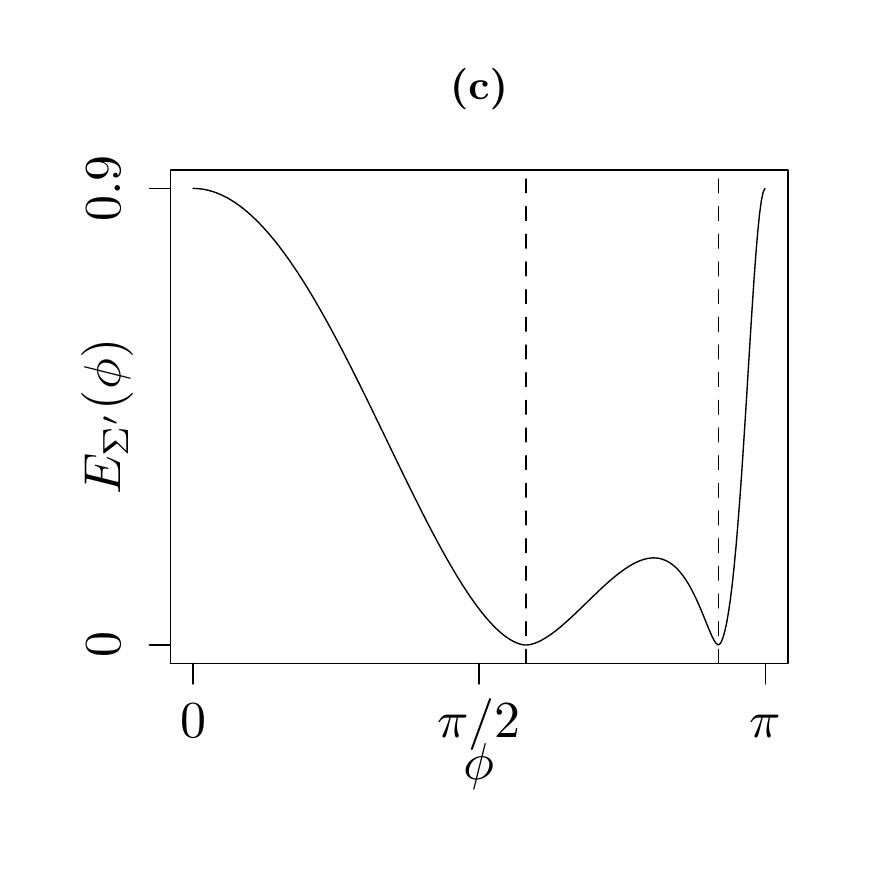}
    \end{subfigure}
    \end{center}
\caption{Fig.(a) illustrates the entanglement in the moving frame of reference as a function of the boosting angle for the parameter values $\eta= 0.6$, $u=v= 0.95$ in the mid-relativistic regime. Fig.(b) shows the Wigner rotation angle as a function of the boosting angle for the speeds $u=v= 0.995$. The dashed vertical lines indicate the interval of the boosting angle, in the ultra-relativistic regime where $\delta \geq \phi/2$. Fig.(c) shows the amount of entanglement in the boosted frame as a function of the boosting angle.}
\label{Fig:local_max}
\end{figure*}

This situation is illustrated in Fig.\ref{Fig:local_max}(a) for the state $\ket{\psi}_\Sigma$.\footnote{The situation for the state $\ket{\xi}_\Sigma$ is similar, the only difference is that monotonic behaviour is inverted.}
The ultra-relativistic regime, where parameter values above the threshold value of the Wigner rotation angle can be observed, is illustrated in Fig.\ref{Fig:local_max}(b) and (c).
We observe, that due to the fact that the entanglement is increasing for $\delta \geq \pi/2$, a local maximum emerges when the entanglement is considered as a function of the boosting angle.
Naturally the question arises whether under specific parameter values the boosted frame of reference observes a higher amount of entanglement than the observer at rest, i.e. if the peak in the entanglement  becomes exceptionally large. The next proposition shows, that the difference in the entanglement in the rest frame and the boosted frame of reference, respectively, is bounded.
\begin{proposition}
Consider the state $\ket{\psi}_\Sigma$. The difference in the entanglement as measured from a moving observer and an observer in the rest frame $\Delta E^\psi  = E^\psi_{\Sigma} - E^\psi_{\Sigma^\prime}$ is bounded by
\begin{align} \label{eq:bound}
\Delta E^\psi \geq \frac{1}{2 \ln 2} \sin^2(2 \eta) \sin^2(\delta).
\end{align}
For the state $\ket{\xi}_\Sigma$, the difference in the entanglement as measured from an observer at rest and a moving observer $\Delta E^\xi  = E^\xi_{\Sigma^\prime} - E^\xi_{\Sigma}$ is bounded by
\begin{align} \label{eq:bounda}
\Delta E^\xi \geq \frac{1}{2 \ln 2} \sin^2(2 \eta) \sin^2(\delta).
\end{align}
\end{proposition}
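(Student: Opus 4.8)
My plan is to reduce both inequalities to a single monotonicity statement about the binary-entropy profile. Write $H(p)=-p\log_2 p-(1-p)\log_2(1-p)$ and set $g(x):=H\!\left(\tfrac{1+x}{2}\right)$ for $x\in[0,1]$; then $g$ is even in $x$, $g(1)=0$, and $g$ is strictly decreasing on $[0,1]$ since $g'(x)=-\tfrac{1}{2\ln 2}\ln\tfrac{1+x}{1-x}\le 0$. Using $\cos^2\eta=\tfrac{1+\cos 2\eta}{2}$, the rest-frame entropy \eqref{eq:Entanglement_Rest_Frame} is $E^\psi_\Sigma(\eta)=g(|\cos 2\eta|)$, while \eqref{Eq:Wigner_moving} reads $E^\psi_{\Sigma^\prime}(\delta)=g(\lambda)$ with $\lambda:=\sqrt{\cos^2(2\eta)+\sin^2\delta\,\sin^2(2\eta)}\in[0,1]$, and the algebraic identity $\lambda^2-\cos^2(2\eta)=\sin^2\delta\,\sin^2(2\eta)$ is exactly the right-hand side of \eqref{eq:bound}. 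Similarly, for $\ket{\xi}_\Sigma$ one has $E^\xi_\Sigma=0$ and, from its boosted-frame entropy, $E^\xi_{\Sigma^\prime}(\delta)=g(\nu)$ with $\nu:=\sqrt{\cos^2(2\eta)+\cos^2\delta\,\sin^2(2\eta)}\in[0,1]$ and $1-\nu^2=\sin^2\delta\,\sin^2(2\eta)$.

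The key step is to introduce $G(x):=g(x)+\tfrac{x^2}{2\ln 2}$ and show $G$ is nonincreasing on $[0,1]$. Differentiating, $G'(x)=\tfrac{1}{\ln 2}\!\left(x-\tfrac12\ln\tfrac{1+x}{1-x}\right)=\tfrac{1}{\ln 2}\bigl(x-\operatorname{artanh} x\bigr)\le 0$, because $\operatorname{artanh} x-x=\sum_{k\ge 1}\tfrac{x^{2k+1}}{2k+1}\ge 0$ on $[0,1)$ (equivalently $\tfrac{d}{dx}(\operatorname{artanh} x-x)=\tfrac{x^2}{1-x^2}\ge 0$, vanishing at $x=0$), and $G$ extends continuously to $x=1$.

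With $G$ in hand both cases close immediately. For $\ket{\psi}_\Sigma$: since $|\cos 2\eta|\le\lambda$, monotonicity gives $G(|\cos 2\eta|)\ge G(\lambda)$, i.e.\ $g(|\cos 2\eta|)-g(\lambda)\ge\tfrac{1}{2\ln 2}\bigl(\lambda^2-\cos^2 2\eta\bigr)=\tfrac{1}{2\ln 2}\sin^2(2\eta)\sin^2\delta$, which is \eqref{eq:bound}. For $\ket{\xi}_\Sigma$: $G(\nu)\ge G(1)=g(1)+\tfrac{1}{2\ln 2}=\tfrac{1}{2\ln 2}$, hence $\Delta E^\xi=g(\nu)=G(\nu)-\tfrac{\nu^2}{2\ln 2}\ge\tfrac{1-\nu^2}{2\ln 2}=\tfrac{1}{2\ln 2}\sin^2(2\eta)\sin^2\delta$, which is \eqref{eq:bounda}. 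As a by-product, both bounds are saturated exactly when $\sin\delta\,\sin 2\eta=0$, matching the unique equality case $x=0$ of $\operatorname{artanh} x\ge x$.

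The part I expect to be the genuine obstacle — as opposed to routine bookkeeping — is spotting the auxiliary function: once one notices that adding the quadratic correction $x^2/(2\ln 2)$ turns the entropy profile $g$ into a monotone function, the proposition collapses to the elementary inequality $\operatorname{artanh} x\ge x$. Everything else (rewriting the two entropies through the same $g$, checking the sign of $g'$, and verifying the identities for $\lambda^2$ and $\nu^2$) is direct computation. One could instead integrate the derivative \eqref{eq:Wigner_der} of $E_{\Sigma^\prime}$ over $\delta$ and estimate the integrand, but the monotone-auxiliary-function route avoids splitting into the ranges $\delta<\pi/2$ and $\delta>\pi/2$ and handles both states at once.
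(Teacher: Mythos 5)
Your proof is correct, and it reorganizes the argument in a way that is genuinely different from the paper's, even though both ultimately rest on the same elementary inequality $\mathrm{artanh}\,x \geq x$. The paper integrates the derivative $\mathrm{d}E^\psi_{\Sigma'}/\mathrm{d}\delta$ from $0$ to $\delta$ and bounds the integrand by replacing $\mathrm{artanh}(\xi)/\xi$ with $1$; you instead change variables to the argument $x$ of the entropy profile $g(x)=H\bigl(\tfrac{1+x}{2}\bigr)$ and recast the whole statement as monotonicity of the auxiliary function $G(x)=g(x)+x^2/(2\ln 2)$, which is the antiderivative form of the very same inequality. What your packaging buys is threefold: (i) it treats $\ket{\psi}_\Sigma$ and $\ket{\xi}_\Sigma$ by one and the same lemma (comparing $G$ at $|\cos 2\eta|$ versus $\lambda$, respectively at $\nu$ versus $1$), whereas the paper only sketches the second case; (ii) it is uniform in $\delta$ --- note that the paper's pointwise bound on the integrand is, as literally written, only immediately valid where $\sin 2\tilde\delta\geq 0$, i.e.\ for $\delta\leq\pi/2$, and for $\delta>\pi/2$ one must additionally track the sign of $p'(\tilde\delta)$ or perform the substitution $x=\xi(\tilde\delta)$, a subtlety your monotone-$G$ formulation sidesteps entirely because $\lambda\in[|\cos 2\eta|,1]$ for all $\delta$; and (iii) it identifies the equality cases ($\sin\delta\sin 2\eta=0$) for free from the strict monotonicity of $G$ on $(0,1]$. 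The paper's route is shorter given that the derivative \eqref{eq:Wigner_der} was already computed for Proposition~\ref{prop:1}, but your version is the cleaner and more complete argument.
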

\begin{proof}
We prove the case $\ket{\psi}_\Sigma$, the results for the state $\ket{\xi}_\Sigma$ can be obtained in a similar way. An observer in the boosted frame of reference observes the same amount of entanglement as the observer in the rest frame in case the Wigner rotation vanishes, i.e. $E^\psi_\Sigma = E^\psi_{\Sigma^\prime}(0)$ holds true. The entanglement difference associated with an arbitrary Wigner rotation angle $\delta$ can therefore be written as
\begin{eqnarray}
\Delta E^\psi  &&=  E^\psi_{\Sigma} - E^\psi_{\Sigma^\prime}=  E^\psi_{\Sigma^\prime}(0) - E^\psi_{\Sigma^\prime}(\delta) \nonumber \\
&&= - \int_0^\delta \frac{\mathrm{d}E_{\Sigma^\prime}}{\mathrm{d} \tilde{\delta}} \mathrm{d} \tilde{\delta}
\end{eqnarray}
By defining the auxiliary variable
\begin{align}
\xi = \xi(\delta) = \sqrt{\cos^2 (2\eta) + \sin^2 \delta \sin^2 (2\eta)},
\end{align}
i.e. $p(\delta) = 1/2(1+ \xi(\delta))$, the derivative of the entanglement in the boosted frame with respect to the Wigner rotation angle stated in \eqref{eq:Wigner_der} is given by
\begin{eqnarray}
\frac{\mathrm{d} E^\psi_{\Sigma^\prime}}{\mathrm{d} \delta} &&= \frac{1}{\ln 2} p^\prime(\delta)[ \ln (1-p(\delta))-\ln p(\delta)] \nonumber \\
&&= \frac{1}{\ln 2} p^\prime(\delta)[\ln(1/2(1-\xi) - \ln(1/2(1+\xi)] \nonumber \\
&&= \frac{1}{\ln 2} p^\prime(\delta)[\ln(1-\xi) - \ln(1+\xi)].
\end{eqnarray}
By reformulating this equation in terms of the inverse hyperbolic tangent via the natural logarithm  $\mathrm{artanh}(x) = 1/2 \ln((1+x)/(1-x))$, we have that
\begin{align}
\frac{\mathrm{d} E^\psi_{\Sigma^\prime}}{\mathrm{d} \delta} &= -\frac{2}{\ln 2} p^\prime(\delta) \mathrm{artanh}(\xi(\delta)).
\end{align}
Thus, we find with $p^\prime(\delta) = \sin 2 \delta \sin^2 2\eta/ (4\xi) $
\begin{align}
\Delta E^\psi  =   \int_0^\delta  \mathrm{artanh} \xi \frac{\sin 2 \tilde{\delta} \sin^2 2\eta}{\xi 2 \ln 2}  \mathrm{d} \tilde{\delta}.
\end{align}
Since $\mathrm{artanh} (x) \geq x$, we find
\begin{eqnarray}
\Delta E^\psi  && \geq \frac{1}{2 \ln 2} \int_0^\delta \sin 2\tilde{\delta} \sin^2 2\eta \mathrm{d} \tilde{\delta} \nonumber\\
&&= \frac{1}{2 \ln 2} \frac{1}{2} \sin^2(2 \eta)(1-\cos (2 \delta)) \nonumber\\
&&=  \frac{1}{2 \ln 2} \sin^2(2 \eta) \sin^2(\delta).
\end{eqnarray}
\end{proof}
We immediately conclude from this proposition, that in case of the state $\ket{\psi}_\Sigma$ the amount of entanglement in the moving frame of reference can never exceed the amount of entanglement measured in the rest frame, since for all parameter values $\Delta E^\psi \geq 0$ holds true.
Therefore, the peak in the entanglement emerging in the ultra-relativistic regime is indeed a local maximum, for non-trivial values of the state-preparation parameter $\eta$ and the Wigner rotation angle $\delta$. In the case of the sate $\ket{\xi}_\Sigma$ the peak is a local minimum due to the fact that the state in the rest frame is initially not entangled.

Note, that we considered states as superpositions of equal and unequal helicity where we assumed the spin-quantization axis to be parallel to the particle's momentum. Relaxing this assumption in order to be able to account for the most general configuration changes the formulas derived, but the qualitative behaviour of the effects in the mid-and ultra-relativistic regime introduced by the boosting angle remains the same.

\section{Conclusion}
In this paper we have analyzed the entanglement of the spin and momentum degrees of freedom for different inertial frames. We explicitly demonstrate that the properties of the amount of the entanglement in the boosted frame as a function of the Wigner rotation angle are different in the mid-relativistic regime ($0 < \delta <  \frac{\pi}{2})$ and the ultra-relativistic regime ($ \frac{\pi}{2} < \delta< \pi$). We show that states with different helicities allow a systematic investigation of the boosting effect. For a state constituted by a linear combination of basis states with equal helicity we demonstrate that in the mid-relativistic regime the entanglement in the boosted frame is minimal, whenever the boosting angle is chosen such that the Wigner rotation is maximal. For $\delta= \frac{\pi}{2}$, characterizing the transition of the mid-into the ultra-relativistic regime we observe that the spin-momentum state in the boosted frame of reference becomes disentangled irrespective of the preparation in the rest frame.  In the ultra-relativistic regime, we show that monotonicity is reversed and that entanglement increases with the Wigner rotation angle. Therefore, we observe an effect that only occurs in the ultra-relativistic regime, i.e. that a local maximum of the entanglement in the boosted frame of reference emerges. The situation is quite different when we consider a state in the rest frame which is constituted by a linear combination of basis states with different helicity. In this case we observe, that monotonic behaviour of the amount of entanglement as a function of the Wigner rotation angle is inverted. Therefore, we observe that in the mid-relativistic regime the entanglement to increase up to the point $\delta= \frac{\pi}{2}$ where we observe maximum entanglement in the boosted frame. In the ultra-relativistic regime $\delta >  \frac{\pi}{2}$ the entanglement increases and a local minimum emerges.

Furthermore, we provided an analytic bound on the difference of the entanglement in the rest frame and the entanglement in the boosted frame.
Therefore, the entanglement in the rest frame can be considered as the ``resource'' of the entanglement in the moving frame of reference and once distributed, it can only (i) decrease in the moving frame of reference for the state $\ket{\psi}$ representing a superposition of equal helicity states in the rest frame and (ii) increase in the moving frame of reference for the state $\ket{\xi}$ representing a superposition of unequal helicity states in the rest frame.

Several publications study entanglement properties in a relativistic regime by considering a specific boosting geometry, where the scenario of perpendicular boosting directions $\phi=  \frac{\pi}{2}$ is the most prominent example. However, for this specific example the ultra-relativistic regime $\delta > \frac{\pi}{2}$ can never be obtained, therefore, disregarding the full parameter space of the boosting angle can be intriguing. %, since entanglement characteristics might change considerably when the full parameter range of the boosting angle is taken into account.
 Moreover, we have demonstrated by this work that the natural basis for discussing the effects of a change of the reference frame are the that the helicity states underpinning the importance of this property of massive particles and demonstrating the striking difference to photons, where the relativistic nature already forces a distinction for a general non-relativistic consideration.

The counter-intuitive effects of the behaviour of the relative entanglement pinpoints to the difficulty in finding a Lorentz-invariant formulation of a measure of entanglement already for the easy case of pure states in the first quantization. Differently stated, our results show that the geometry of space enters in a non-trivial way effecting the correlation properties, which may pave the path to a novel approach of a Lorentz-invariant entanglement formulation if reachable.

\bibliographystyle{apsrev4-1} % Tell bibtex which bibliography style to use
\bibliography{bibliography} % Tell bibtex which .bib file to use (this one is some example file in TexLive's file tree)

\end{document}